\newcommand{\num}[1]{\relax\ifmmode \mathbb #1\else $\mathbb #1$\fi}
\newcommand{\naturals}{{\num N}}
\newcommand{\reals}{{\num R}}
\newcommand{\R}{{\mathcal{R}}}
\newcommand{\T}{{\mathcal{T}}}
\DeclareMathOperator*{\argmin}{arg\,min}
\newtheorem{exm}{Example}[section]
\newtheorem{prp}{Proposition}[section]
\newenvironment{proof}[1][Proof]{\begin{trivlist}
\item[\hskip \labelsep {\bfseries #1}]}{\end{trivlist}}
\newcommand{\qed}{\nobreak \ifvmode \relax \else
      \ifdim\lastskip<1.5em \hskip-\lastskip
      \hskip1.5em plus0em minus0.5em \fi \nobreak
      \vrule height0.75em width0.5em depth0.25em\fi}
\title{\LARGE \bf Road Pricing for Spreading Peak Travel: Modeling and Design}
\author[*]{Tichakorn Wongpiromsarn}
\author[$\dag$]{Nan Xiao}
\author[$\dag$]{Keyou You}
\author[$\dag$]{Kai Sim}
\author[$\dag$]{Lihua Xie}
\author[$\ddag$]{Emilio Frazzoli}
\author[$\ddag$]{Daniela Rus}
\affil[*]{\small Singapore-MIT Alliance for Research and Technology, Singapore}
\affil[$\dag$]{\small Nanyang Technological University, Singapore}
\affil[$\ddag$]{\small Massachusetts Institute of Technology, Cambridge, MA, USA}
\date{}
\begin{document}

\maketitle
\thispagestyle{empty}
\pagestyle{empty}

\begin{abstract}
A case study of the Singapore road network provides empirical evidence that road pricing can significantly affect commuter trip timing behaviors.
In this paper, we propose a model of trip timing decisions that reasonably matches the observed commuters' behaviors.
Our model explicitly captures the difference in individuals' sensitivity to price, travel time and early or late arrival at destination.
New pricing schemes are suggested to better spread peak travel and reduce traffic congestion.
Simulation results based on the proposed model are provided in comparison with the real data for the Singapore case study.
\end{abstract}

\vspace{3mm}
\section{INTRODUCTION}\label{sec:intro}
\vspace{2mm}

Traffic congestion causes significant efficiency losses, wasteful
energy consumption and excessive air pollution.
This problem arises in many urban areas because of the continual growth
in motorization and the difficulties in increasing road capacity due
to space limitations and budget constraints. As a result, traffic
management that aims at maximizing the efficiency and effectiveness
of road networks without increasing road capacity becomes
increasingly crucial.
In the recent decades, the technology in communication, control and information areas has advanced substantially,
making it possible to create intelligent traffic systems of high efficiency \cite{Orosz10TrafficJams}.

Typical strategies that aim at reducing traffic congestion include
ramp metering at freeway on-ramps, variable speed limits on freeways
and signal timing plan at signalized intersections
\cite{Kurzhanskiy10Active}. A case study on the traffic system in
California shows that transportation pricing such as congestion
pricing, parking pricing, fuel tax pricing, vehicle miles of travel
fees and emissions fees can better manage the transportation system
to a great extent \cite{Deakin96nPricing}. As another example, the
Electronic Road Pricing (ERP) system in Singapore charges motorists
when they use certain roads during the peak hours in order to
maintain an optimal speed range for both expressways and arterial
roads \cite{MenonERP00,olszewski2005modelling}.
A comprehensive review of the design and evaluation of road pricing schemes
can be found, for example, in \cite{Button98,Tsekeris09}.

The road pricing system is typically implemented for two main objectives.
First, it is designed to affect the route-choice behaviors. For
example, the charges on expressways motivate the motorists to use
alternative, less congested, arterial roads even though it comes at
the cost of extra travel time. Second, road pricing is enforced on
many of the roads in the city area in order to refrain the motorists
from using those roads during the peak hours as no alternative route
with cheaper rate is possible. Hence, a significant portion of the
motorists will either turn to public transportation or rearrange
their schedules to avoid entering the city during the peak hours.
Previous studies have mainly focused on the first objective. The
notion of Wardrop equilibrium, with travel time being the main
component in the travel cost, has been utilized in order to find a
pricing scheme that moves the user equilibrium (where all travelers
minimize their own travel cost) to the system optimum (where the
total travel time in the transportation system is minimized)
\cite{wardrop52,patriksson1994traffic,como:robust}.

In this paper, the latter objective is considered where the trip departure/arrival
time, instead of the path choice, is the decision to be made by the
motorists. We focus on modeling the effect of road pricing on
motorists' trip timing behaviors and designing the road pricing
strategy to spread peak travel and to avoid congestion.
The multinomial logit (MNL) model \cite{mcfadden1973conditional}, which is
a typical discrete choice model, has been employed, for example, in
\cite{olszewski2005modelling,chin1990influences} to study the trip re-timing behaviors.
In those studies, however, only the effect of a given pricing scheme was analyzed and the analysis was
only for the case where the motorists have a finite number of choices of departure/arrival time.
In addition, with the MNL model, the variation in the parameters of the utility function was not explicitly captured.

The main contribution of this paper is twofold.
First, we explicitly model variation in the parameters of the utility or cost function among different motorists.
Second, the traffic pricing design that aims at spreading peak travel is addressed. 

The remainder of the paper is organized as follows.
A trip timing model as well as the case study of Singapore road network are presented in the next section.
Section \ref{sec:discrete} presents a pricing strategy that better spreads peak travel.
Simulation results are provided in Section \ref{sec:case-study}.
Finally, Section \ref{sec:conclusion} concludes the paper and discusses future work.


\section{TRIP TIMING AND A MOTIVATIONAL EXAMPLE}\label{sec:motivational}

We consider a particular road $\R$ during the period of interest
and assume that each motorist decides his/her arrival time at $\R$
by minimizing his/her travel cost.
%
In general, the travel cost is different for different motorists and depends on many factors such
as one's preferred arrival time (e.g., worker's official work hours
or child's school hours) and arrival time flexibility, travel time,
road price and sensitivity to price (affected by occupational and
family status), car occupancy, transportation mode flexibility, etc.

Consider a motorist whose travel cost of arriving at road
$\R$ at time $t$ is defined by $J(t)$.
As noted earlier, $J(t)$ may be different for different motorists.
The optimal arrival time at road $\R$ of this motorist is given by
\begin{equation}
\label{eq:model}
t^*=\arg\min_{t}J(t).
\end{equation}

Without pricing, there would be a high concentration of demand
during the rush hour, leading to congestion. 
%
As a motivating example,
consider the Tanjong Pagar area (Fig.~\ref{fig:tanjong-pagar}), which is
located in the heart of the central business district (CBD) of Singapore.
From the locations of the ERP gantries and the directions
of the roads, it can be checked that the motorists get charged the
same rate during the peak hours no matter which road they pick to
enter this area.
\begin{figure}[tbh]
\centering
\includegraphics[scale=0.6]{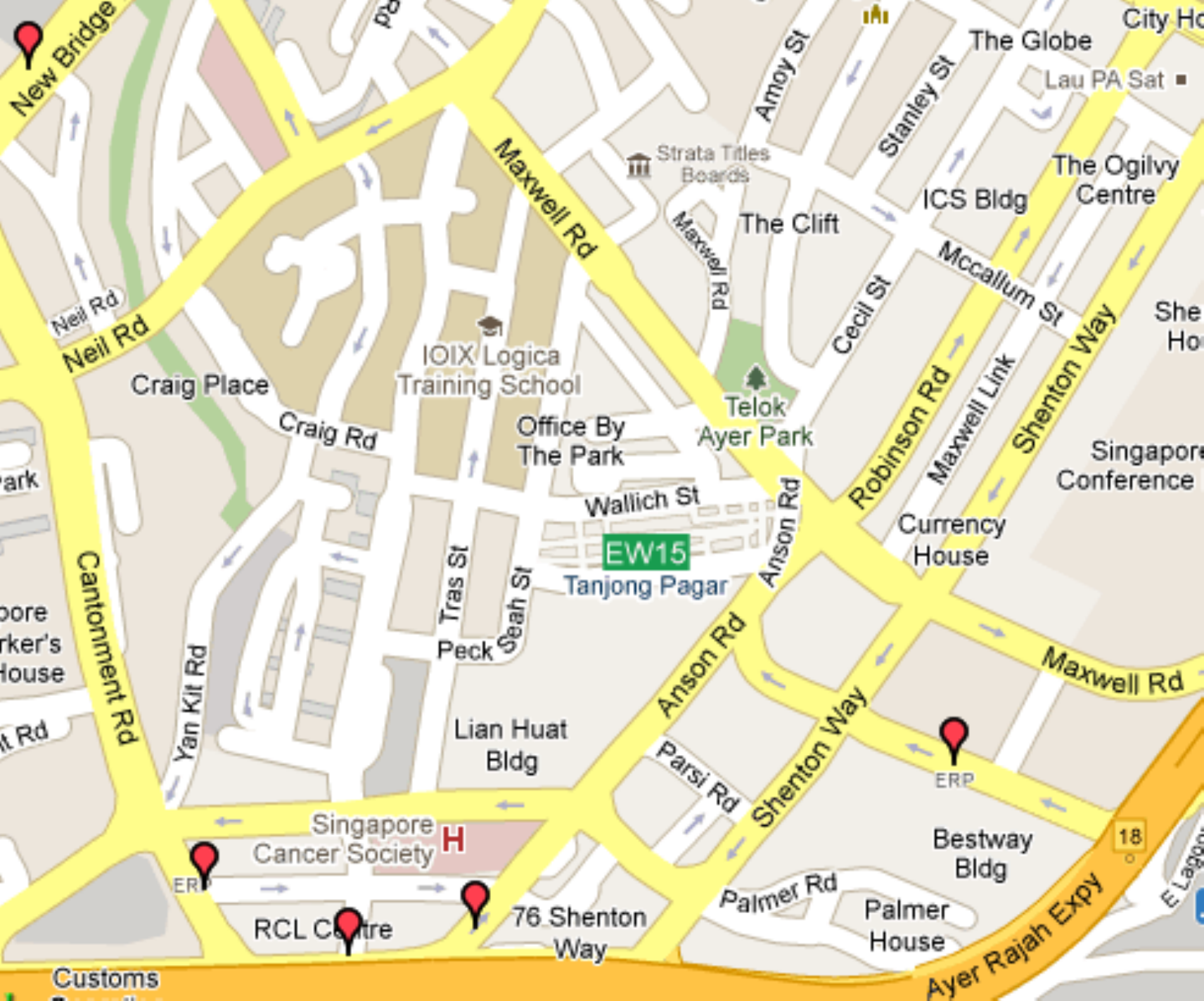}
\caption{The map of Tanjong Pagar area, Singapore with the locations
of the ERP gantries.} \label{fig:tanjong-pagar}
\end{figure}

Fig.~\ref{fig:anson-flow} shows the traffic flow on Anson Road,
which is one of the roads that can be used to enter the Tanjong Pagar area,
during the weekdays of August 2010 as well as the ERP rate. For the
morning peak hours (roughly, from 7am to 10:30am), there are 3
noticeable peaks in the flow: at 7:55am, which is right before the
ERP is effective, around 8:45am when the charge is maximum and at
10:00am, which is right after the ERP become inactive. From this
consistent observation over all the weekdays of August 2010, it is
reasonable to conclude that a significant portion of the motorists
who regularly use this road intentionally adjust their schedule to
avoid being charged.
\begin{figure}[tbh]
\centering
\includegraphics[trim=3.5cm 0cm 1.2cm 0cm, clip=true, width=0.7\textwidth]{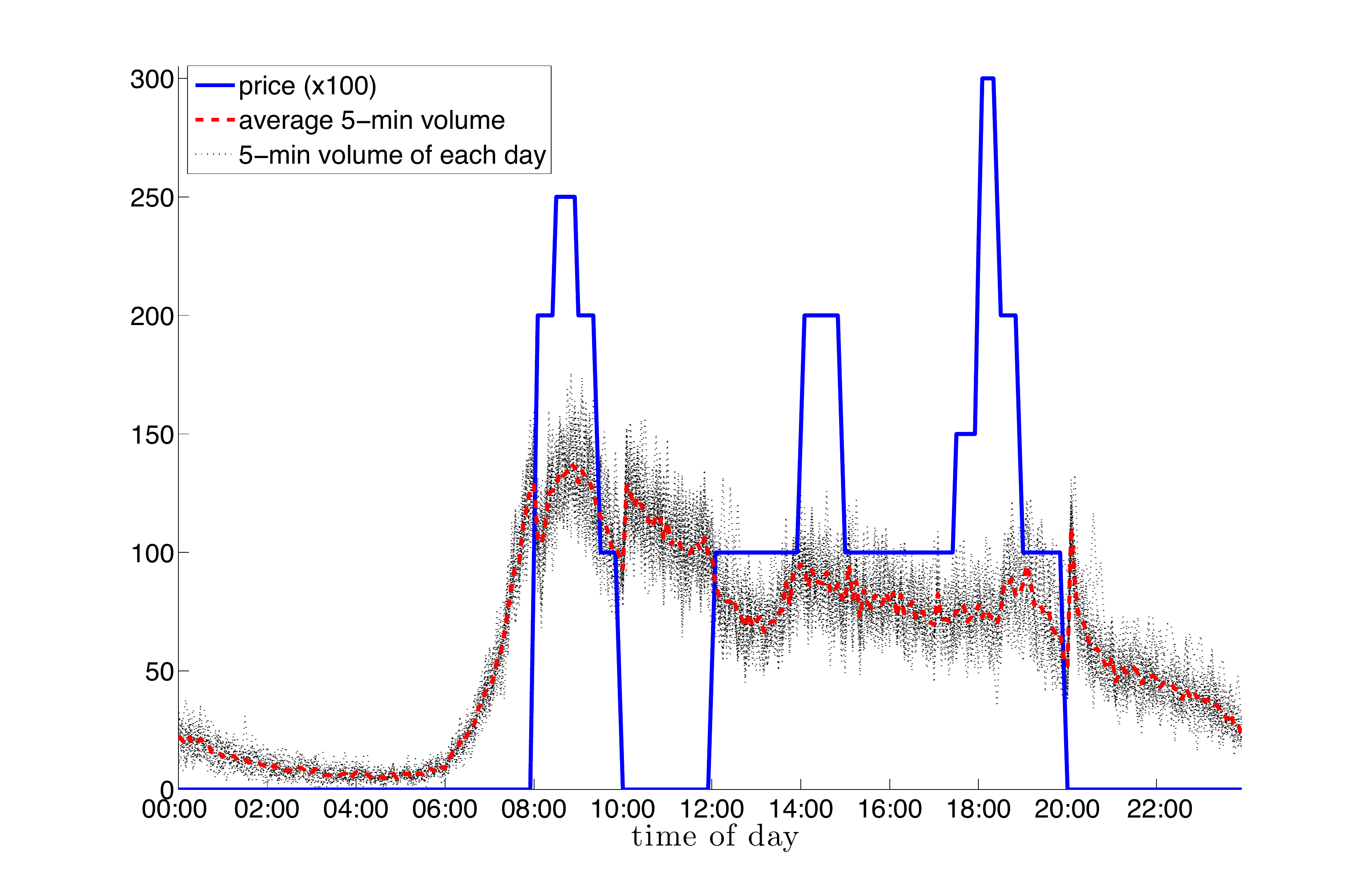}
\caption{Traffic flow on Anson Road during the weekdays of August
2010.} \label{fig:anson-flow}
\end{figure}

Next we propose a model that explains the behavior observed in
Fig.~\ref{fig:anson-flow}.
Let $\T \subset \reals_{\geq 0}$ denote the domain of $t$,
$p(t), t \in \T$ denote the road price at time $t$,
and $d(t), t \in \T$ denote the expected travel time from the motorist's origin to destination,
assuming that the motorist arrives at road $\R$ at time $t$.
Consider the case where the travel cost is given by
\begin{equation}
\label{eq:travel-cost}
J(t) = p(t) + J_t(t) + J_d(d(t)),
\end{equation}
where $J_t: \T \to \reals_{\geq 0}$ captures the cost of arriving earlier or later
than the preferred arrival time $T$ and
$J_d: \reals_{\geq 0} \to \reals_{\geq 0}$ captures the travel time factor of the travel cost.

As an initial step, we neglect the travel time factor and let
\begin{equation}
\label{eq:Jt}
\begin{array}{rcl}
J_d(d) &=& D, \forall d \geq 0\\
J_t(t) &=& \left\{ \begin{array}{ll}
b_1(T - t) &\hbox{if } t \leq T\\
b_2(t - T) &\hbox{otherwise}
\end{array}\right. ,
\end{array}
\end{equation}
where $D \geq 0$ is a constant
and $b_1, b_2 \geq 0$ are the parameters that represent the
amount of money the motorist is willing to pay to save a minute of early and late arrival respectively
and may be different for different motorists.
Assume that the preferred arrival time $T$ of each motorist is 8:45 am.
Fig.~\ref{fig:opttime} shows the optimal arrival time for
each value of $b_1$ and $b_2$ for the case where $p(t)$ is the
current price implemented on Anson road (cf.
Fig.~\ref{fig:anson-flow}).

\begin{figure}[tbh]
\centering
    \includegraphics[trim=3.5cm 0cm 1.5cm 0cm, clip=true, width=0.4\textwidth]{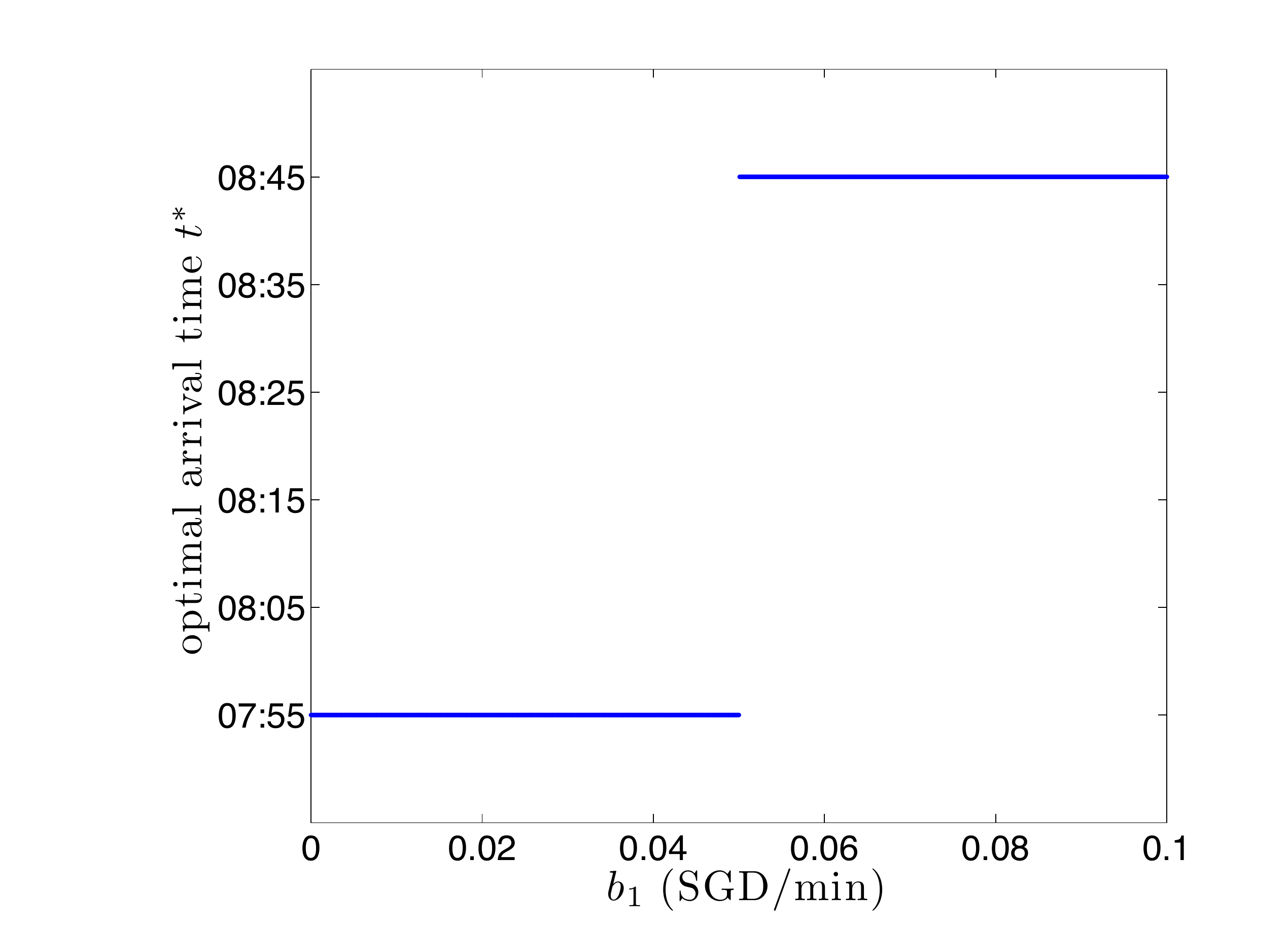}
    \hfill
    \includegraphics[trim=3.5cm 0cm 1.5cm 0cm, clip=true, width=0.4\textwidth]{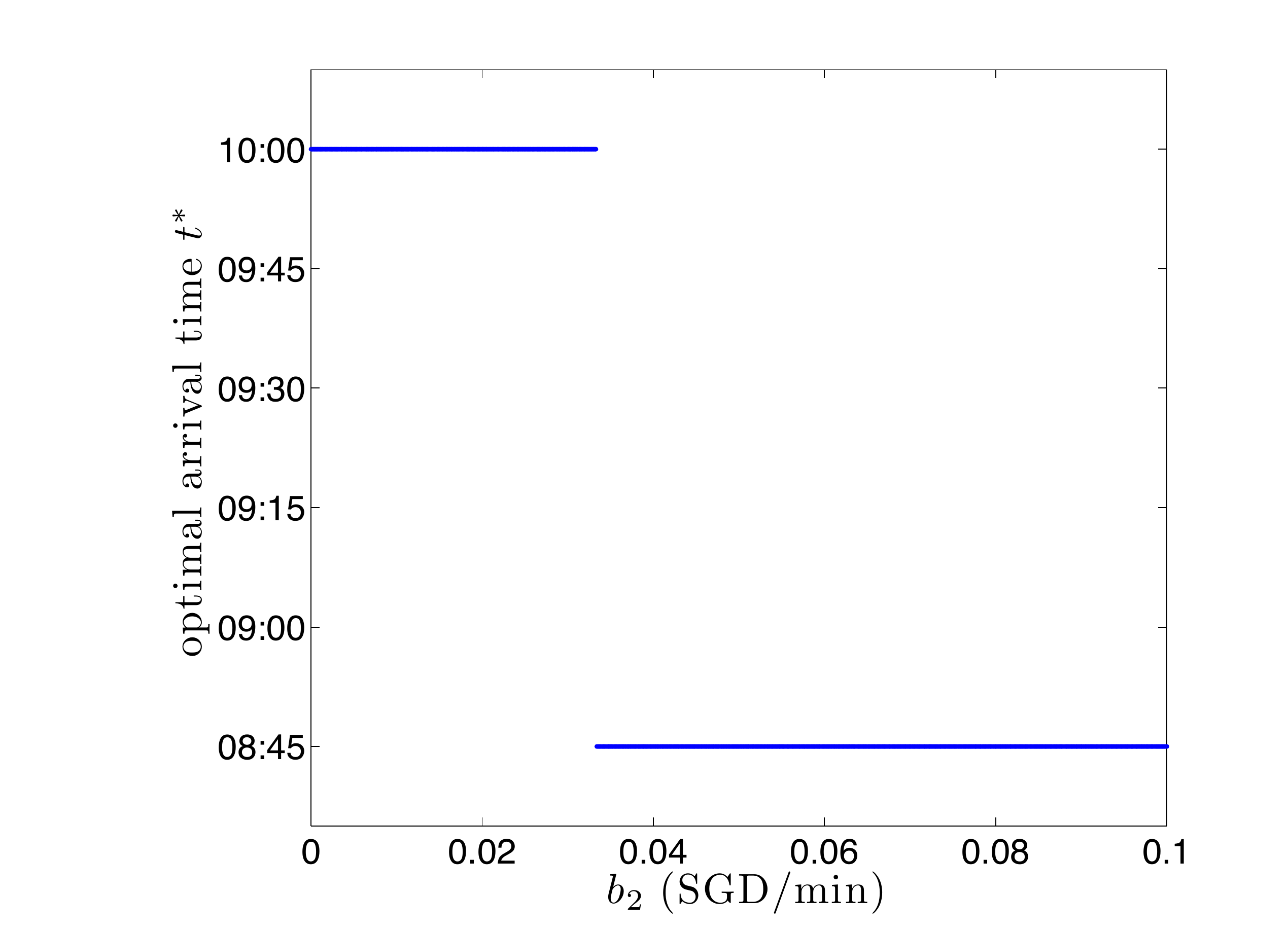}
\caption{The $b$-$t^*$ relations, i.e., the optimal arrival time
$t^*$ for each value of $b_1$ and $b_2$ for the case where $p(t)$ is
the current price implemented on Anson road.} \label{fig:opttime}
\end{figure}

From Fig.~\ref{fig:opttime}, the optimal arrival time is only one of
the followings: (a) 7:55am, which is right before the ERP is
effective, (b) 8:45am, which is the preferred arrival time, (c)
10:00am, which is right after the ERP become inactive. This matches
the observed behaviors of the motorists in
Fig.~\ref{fig:anson-flow}. However, this optimal arrival time
distribution is not efficient as there is a high concentration of
demand only at 3 different times. Ideally, the optimal arrival times
should be distributed equally among various time slots. In the next
sections, we derive a pricing scheme that results in such an equally
distributed optimal arrival times.

\section{PRICING STRATEGIES}\label{sec:discrete}



%



As a starting point, we consider the cost function in
(\ref{eq:travel-cost}) with $J_d(d)$ and $J_t(t)$ as defined in (\ref{eq:Jt}).
Assume that the preferred arrival time $T$ is the same for all motorists.
Then, the optimal arrival time $t^*$ for
each motorist with respect to the travel cost $J(t)$ only depends on
the pricing scheme $p(t)$ and his/her time-money trade-off
parameters $b_1$, $b_2$. For the simplicity of the presentation, we
consider only motorists who prefer early over late arrival, i.e.,
$b_2 \gg b_1$, and refer to $b_1$ simply as $b$ for the rest of this
section. Similar results can be derived for motorists who prefer
late over early arrival.

For a given pricing scheme $p$, we define a map $F_p : \reals_{\geq
0} \to [0, T]$ that takes the time-money trade-off parameter and
returns the maximum optimal arrival time as follows%
\footnote{For certain values of $b$, $J(t)$ may attain its minimum value at two different values of $t$.
In this case, we simply assume that the motorist picks the maximum of such optimal arrival times,
i.e., the arrival time that is closest to his desired arrival time, as his/her actual arrival time.}
\begin{equation}
\label{eq:Fp}
  F_p(b) = \max \left( \argmin_t \big(p(t) + b(T - t) + D \big) \right).
\end{equation}

Given a desired map $F : \reals_{\geq 0} \to [0, T]$, in this
section, we derive a pricing scheme $p$ such that $F_p = F$. We
consider the case where $F$ is a monotonically increasing step
function, e.g., as shown in Fig.~\ref{fig:desired-b-t}. In this
case, $F$ can be written as
\begin{equation}
\label{eq:F}
  F(b) = \left\{ \begin{array}{ll}
  T_i &\hbox{if } B_i \leq b < B_{i+1}, \forall i \in \{1, \ldots, N-1\}\\
  T_N &\hbox{if } b \geq B_N
\end{array}\right. ,
\end{equation}
where $N \in \naturals$, $0 = B_1 < B_2 < \ldots < B_N$ and $0 \leq T_1 < T_2 < \ldots < T_N =
T$.


\begin{figure}[tbh]
\centering
    \includegraphics[width=0.7\textwidth]{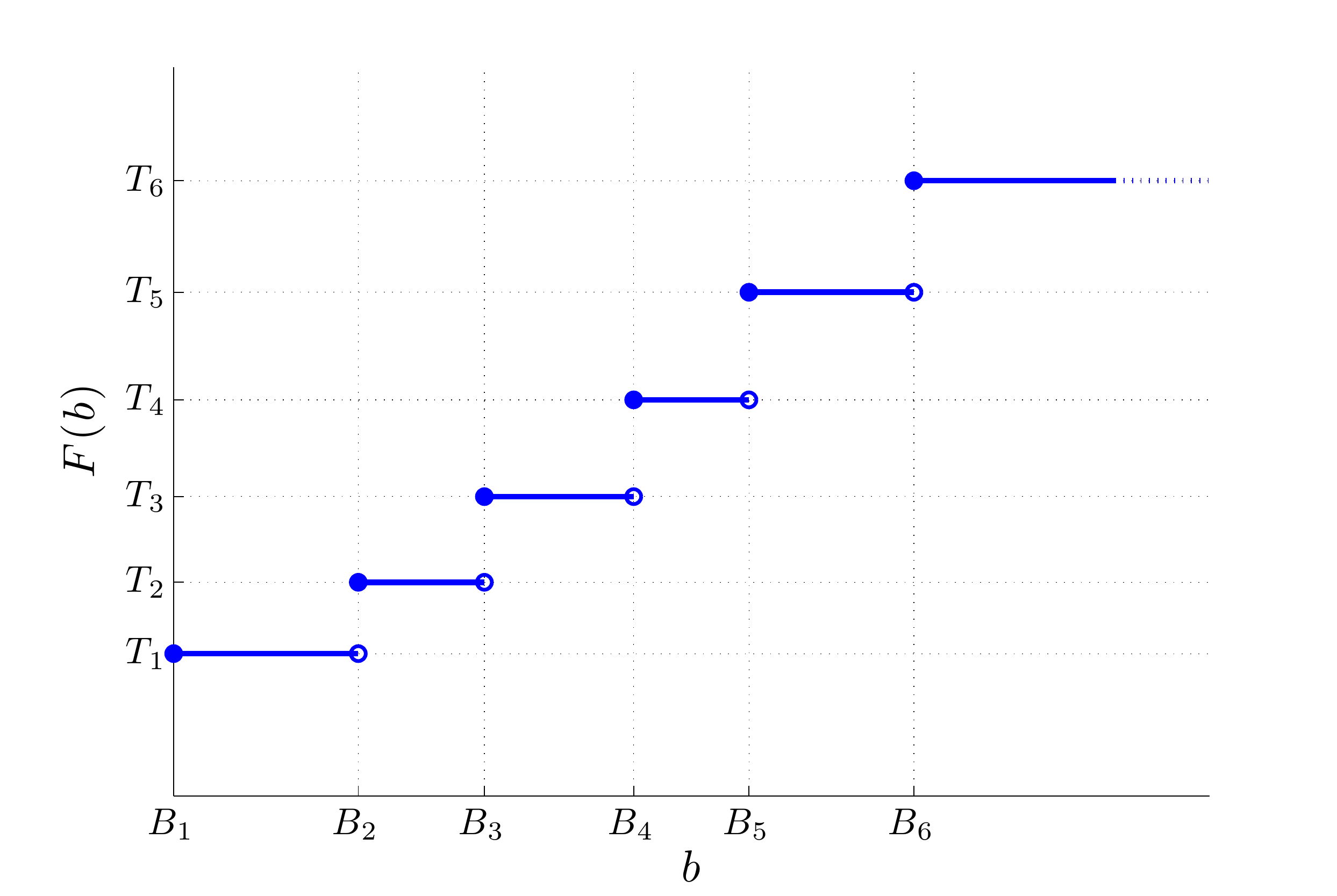}
\caption{Example of a desired map $F$ for $N = 6$.} \label{fig:desired-b-t}
\end{figure}


\begin{prp}
Consider the case where $p(t)$ is a step function. Define $p_1, p_2,
\ldots, p_N$ such that
\begin{equation}
\label{eq:price} p(t) = \left\{ \begin{array}{ll}
  p_1 &\hbox{if } t \leq T_1\\
  p_i &\hbox{if } T_{i-1} < t \leq T_i, \forall i \in \{2, \ldots, N\}
\end{array}\right.
\end{equation}
Then, $F_p = F$ if
\begin{equation}
\label{eq:pi}
  p_i = \sum_{k=2}^i B_k(T_k - T_{k-1}) + p_1, \forall i \in \{2, \ldots, N\}.
\end{equation}
\end{prp}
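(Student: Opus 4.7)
The plan is to verify that, under the proposed step pricing, the arg min of $p(t)+b(T-t)+D$ as a function of $t$ collapses to one of the breakpoints $T_1,\dots,T_N$, and then to identify which breakpoint wins as a function of $b$ using a simple telescoping identity.

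First I would observe that since $b\ge 0$ and $p(t)$ is piecewise constant on the intervals $(T_{i-1},T_i]$ (with the first interval $[0,T_1]$), the cost $J(t)=p(t)+b(T-t)+D$ is nonincreasing in $t$ on each such interval, and strictly decreasing when $b>0$. Consequently the minimum of $J$ over $[0,T]$ is attained at (at most) the right endpoints, so the candidate optimal arrival times reduce to the finite set $\{T_1,\dots,T_N\}$. This lets me replace the continuous optimization in (\ref{eq:Fp}) by a finite one: set $c_i(b):=p_i+b(T-T_i)+D$ and work with $\arg\min_i c_i(b)$, whose maximum index then yields $F_p(b)$.

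Second I would extract the key identity from (\ref{eq:pi}): telescoping gives $p_i-p_{i-1}=B_i(T_i-T_{i-1})$ for $i\ge 2$. Using this I compute, for any $j>i$,
\begin{equation*}
c_j(b)-c_i(b)=\sum_{k=i+1}^{j}\bigl(B_k-b\bigr)(T_k-T_{k-1}),
\end{equation*}
and symmetrically, for $j<i$,
\begin{equation*}
c_j(b)-c_i(b)=\sum_{k=j+1}^{i}\bigl(b-B_k\bigr)(T_k-T_{k-1}).
\end{equation*}
Both formulas follow by repeated application of the same one-step identity.

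Third I would read off the result. Fix $b$ with $B_i\le b<B_{i+1}$ (using the convention $B_{N+1}=+\infty$ to handle the last case $b\ge B_N$). For every $j>i$, each factor $B_k-b$ in the first sum satisfies $B_k\ge B_{i+1}>b$, so $c_j(b)>c_i(b)$: every later breakpoint is strictly worse. For every $j<i$, each factor $b-B_k$ in the second sum satisfies $B_k\le B_i\le b$, so $c_j(b)\ge c_i(b)$: no earlier breakpoint is strictly better. Hence $T_i$ lies in the arg min and is the largest such element, so $F_p(b)=T_i=F(b)$, matching (\ref{eq:F}).

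I do not expect a genuine obstacle; the argument is essentially a one-line telescoping calculation dressed up with a monotonicity observation on each constant-price segment. The only point to treat carefully is the tie-breaking convention: at the threshold values $b=B_i$, the $k=i$ term in the second sum vanishes, so $T_{i-1}$ and $T_i$ (and possibly earlier breakpoints if additional $B_k$'s equaled $b$, which they do not here since the $B_k$'s are strictly increasing) share the minimum; the "$\max$" in the definition (\ref{eq:Fp}) selects $T_i$, which is precisely what the desired map $F$ prescribes at $b=B_i$. The $b=0$ case, where $J$ is constant on $[0,T_1]$ and the arg min is the whole interval, is handled identically, with the maximum being $T_1=F(0)$.
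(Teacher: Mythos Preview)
Your proof is correct and follows essentially the same approach as the paper: both reduce $F_p=F$ to the pairwise comparison $p_i+b(T-T_i)\le p_j+b(T-T_j)$ for $b\in[B_i,B_{i+1})$, and both verify this via the telescoping structure of \eqref{eq:pi} together with the monotonicity of the $B_k$'s. Your packaging of the difference $c_j(b)-c_i(b)$ as a single sum $\sum(B_k-b)(T_k-T_{k-1})$ is a bit cleaner than the paper's separate bounding-and-adding steps, and you are more explicit about the reduction to the breakpoints $T_i$ and the tie-breaking at $b=B_i$, but the underlying argument is the same.
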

\begin{proof}
From (\ref{eq:Fp}), (\ref{eq:F}) and (\ref{eq:price}), it can be checked that a necessary and sufficient condition
for $F_p = F$ is that for each $i,j \in \{1, \ldots, N\}$,
\begin{equation}
  \label{pf:cond-fixedT}
  p_i + b(T - T_i) \leq p_j + b(T - T_j), \forall b \in [B_i, B_{i+1}).
\end{equation}
The case where $j = i$ is trivial so we only need to consider the case where $j \not= i$.
Since $T_j > T_i$ and $B_j > B_i \geq 0$ for all $j > i$, condition (\ref{pf:cond-fixedT}) is satisfied if
\begin{equation}
\label{pf:cond-fixedT2}
  p_i \leq \left\{ \begin{array}{ll}
  B_i(T_i - T_j) + p_j, &\forall j < i\\
  B_{i+1}(T_i - T_j) + p_j, &\forall j > i
  \end{array}\right.
\end{equation}

First, consider the case where $j < i$.
Since $B_k > B_i, \forall k > i$, we get
\begin{equation*}
  \sum_{k = j+1}^i B_k(T_k - T_{k-1}) \leq \sum_{k=j+1}^iB_i(T_k - T_{k-1}) = B_i(T_i - T_j).
\end{equation*}
Adding $\sum_{k=2}^j B_k(T_k - T_{k-1}) + p_1$ to both sides and using (\ref{eq:pi}), we get
\begin{eqnarray*}
  p_i &=& \sum_{k = 2}^i B_k(T_k - T_{k-1}) + p_1 \\
         &\leq& B_i(T_i - T_j) + \sum_{k=2}^j B_k(T_k - T_{k-1}) + p_1\\
         &=& B_i(T_i - T_j) + p_j
\end{eqnarray*}

With similar reasoning, for the case where $j > i$, we have
\begin{equation*}
  B_{i+1}(T_j - T_i) = \sum_{k=i+1}^j B_{i+1}(T_k - T_{k-1})  \leq \sum_{k = i+1}^j B_k(T_k - T_{k-1}).
\end{equation*}
Adding $B_{i+1}(T_i - T_j) + \sum_{k=2}^i B_k(T_k - T_{k-1}) + p_1$ to both sides, we get
\begin{eqnarray*}
  p_i &=& \sum_{k = 2}^i B_k(T_k - T_{k-1}) + p_1 \\
         &\leq& B_{i+1}(T_i - T_j) + \sum_{k=2}^j B_k(T_k - T_{k-1}) + p_1\\
         &=& B_{i+1}(T_i - T_j) + p_j
\end{eqnarray*}

Hence, condition (\ref{pf:cond-fixedT2}) is satisfied and we can conclude that $F_p = F$.

\end{proof}

\begin{exm}
Consider the case where $N = 6$, $B_i = 0.02(i-1)$,
$T_i = T - 10(N-i)$, $i \in \{1, \ldots, N\}$.
With $p_1 = 0$, according to (\ref{eq:pi}), we get
$p_2 =  0.2$, $p_3 = 0.6$, $p_4 = 1.2$, $p_5 = 2$ and $p_6 = 3$.
This pricing scheme as well as the one currently implemented are shown in Fig.~\ref{fig:newprice}.
\end{exm}

\begin{figure}[tbh]
\centering
   \includegraphics[trim=1.5cm 0cm 1.5cm 0cm, clip=true, width=0.7\textwidth]{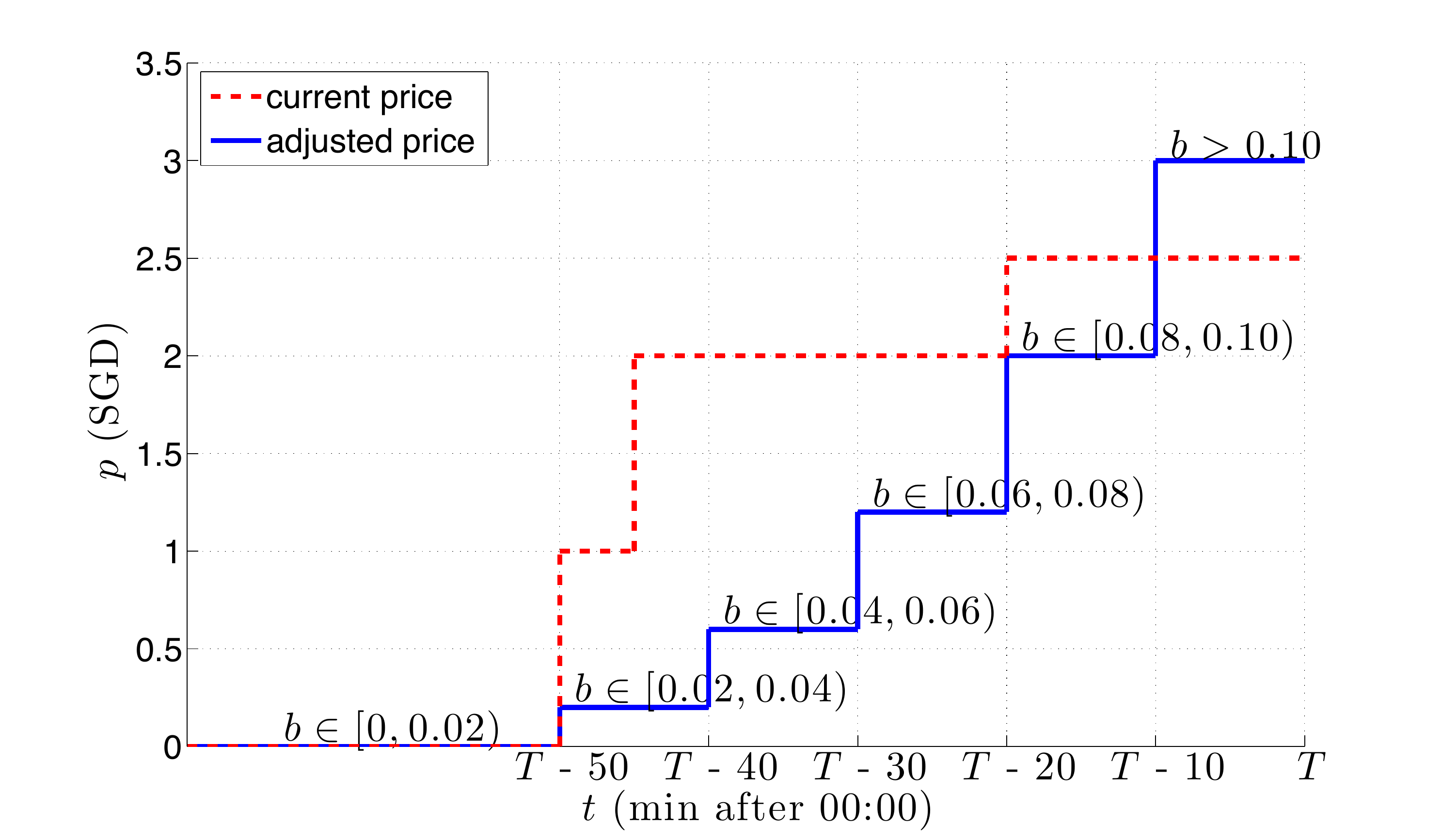}
\caption{Example of the proposed pricing scheme. The range of $b$ associated with each price is also shown.}
\label{fig:newprice}
\end{figure}



%
%

\section{CASE STUDY OF SINGAPORE}
\label{sec:case-study}

Reconsider the motivational example in
Section~\ref{sec:motivational}. Assume that $b$ has a Gaussian
distribution with a certain mean $\mu_b$ and variance $\sigma_b$.
The set of possible means and variances of b can be computed from
historical data. For example, Fig.~\ref{fig:bdistr} shows possible
distributions of $b$ based on the ratio between the average number
of motorists at 7:55am and at 8:45am (cf.
Fig.~\ref{fig:anson-flow}).

Using the distribution with mean $0.051$ (Fig.~\ref{fig:bdistr}),
the map $F$ can be computed such that the numbers of motorists at
times $T_1, \ldots, T_N$ are equal. An example of such $F$ for $N =
6$, $T_i = T - 10(N-i), i \in \{1, \ldots, N\}$ is
shown in Fig.~\ref{fig:bdistr205}. The corresponding pricing scheme
based on Eq (\ref{eq:pi}) is shown in Fig.~\ref{fig:price_mu205}.

\begin{figure}[b]
\centering
   \includegraphics[width=0.4\textwidth]{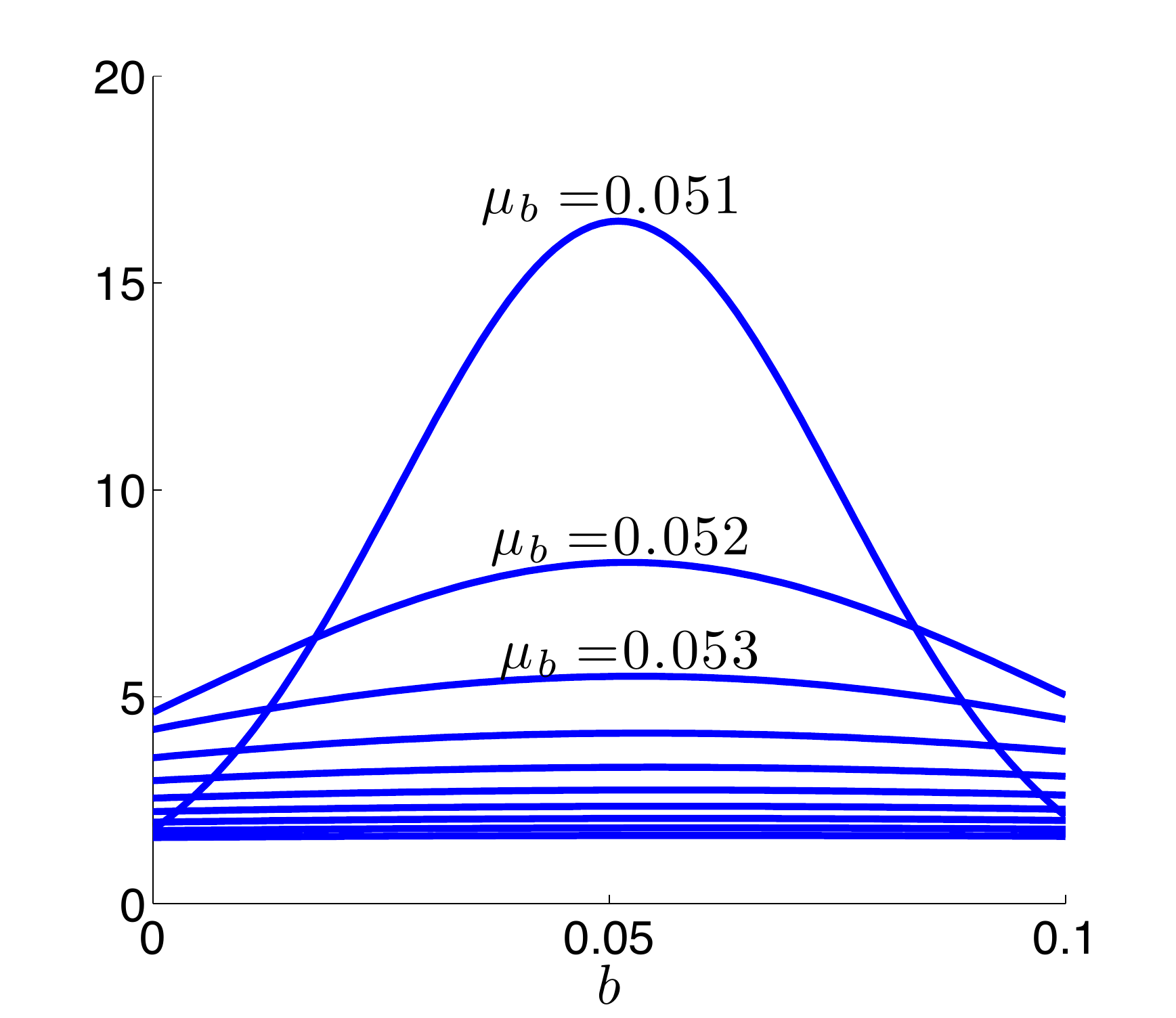}
   \hfill
   \includegraphics[width=0.4\textwidth]{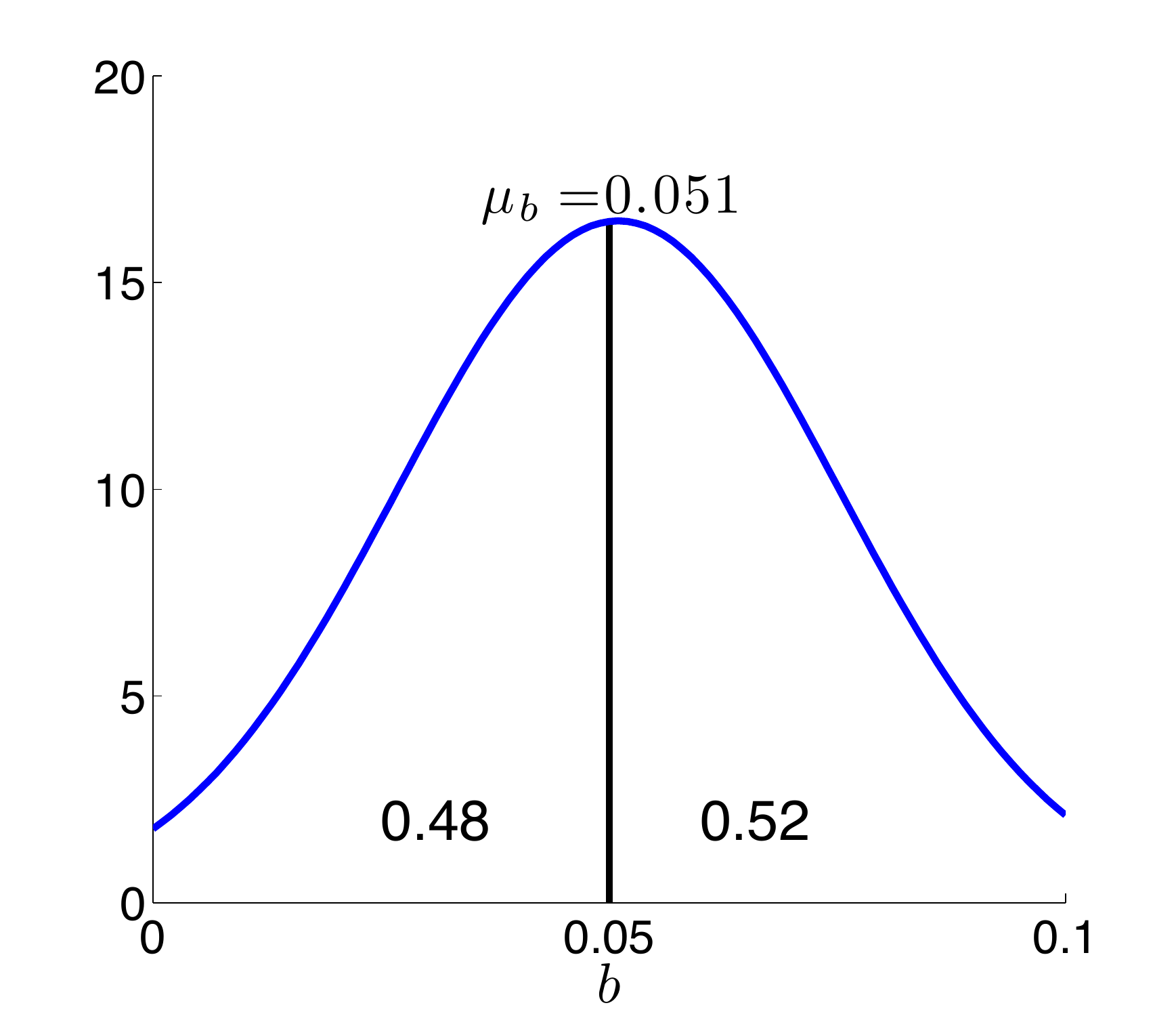}
\caption{(left) Possible distributions of $b$. (right) Distribution of $b$ with mean $0.051$} \label{fig:bdistr}
\end{figure}

\begin{figure}[tbh]
\centering
   \includegraphics[trim=1.2cm 0cm 2cm 0cm, width=0.7\textwidth]{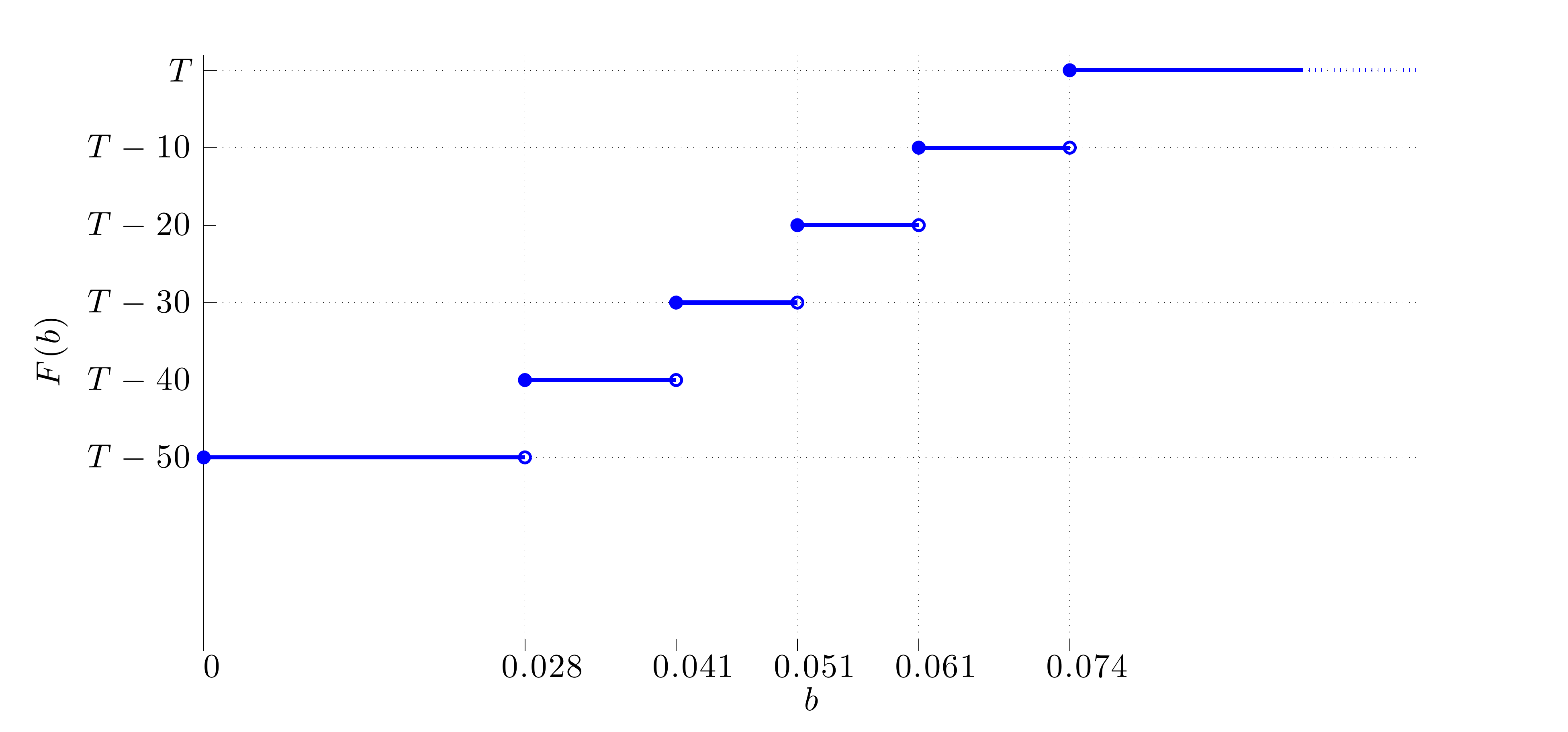}
\caption{A map $F$ that leads to the equal number of motorists at times $T_1, \ldots, T_6$,
assuming that $b$ has a Gaussian distribution with mean 0.051.} \label{fig:bdistr205}
\end{figure}

\begin{figure}[tbh]
\centering
   \includegraphics[width= 0.7\textwidth]{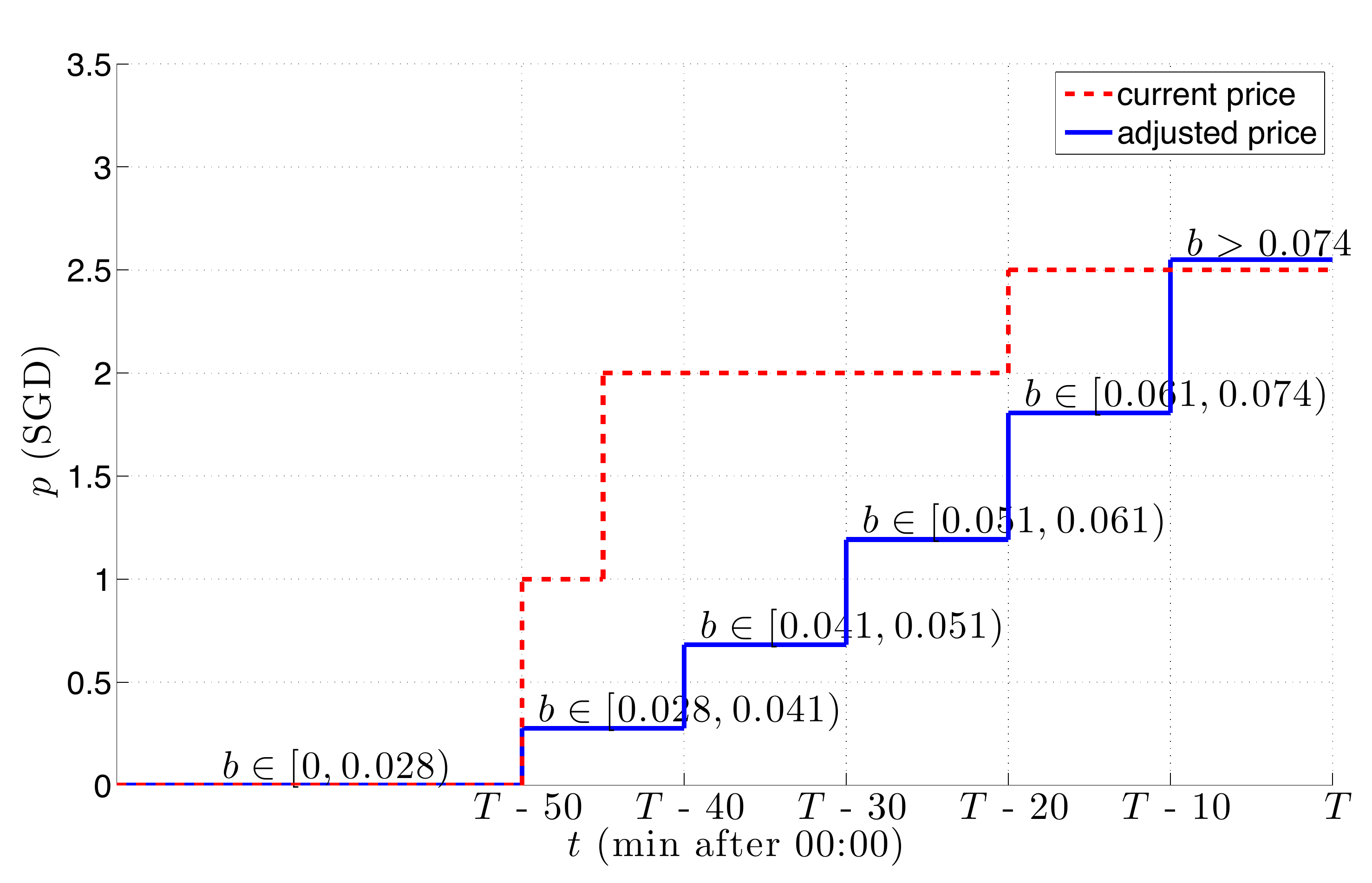}
\caption{The corresponding pricing scheme based on the map $F$
provided in Fig.~\ref{fig:bdistr205}.} \label{fig:price_mu205}
\end{figure}

The current flow between 7:55am and 8:45am and the results from 10,000,000 Monte Carlo simulations
based on the model in (\ref{eq:model}) with the cost function $J(t)$ defined in (\ref{eq:travel-cost}) and (\ref{eq:Jt})
are shown in Fig.~\ref{fig:simresults}.
The actual arrival time is obtained by adding some Gaussian noise to the optimal arrival time $t^*$.

\begin{figure}[tbh]
\centering
   \includegraphics[width= 0.7\textwidth]{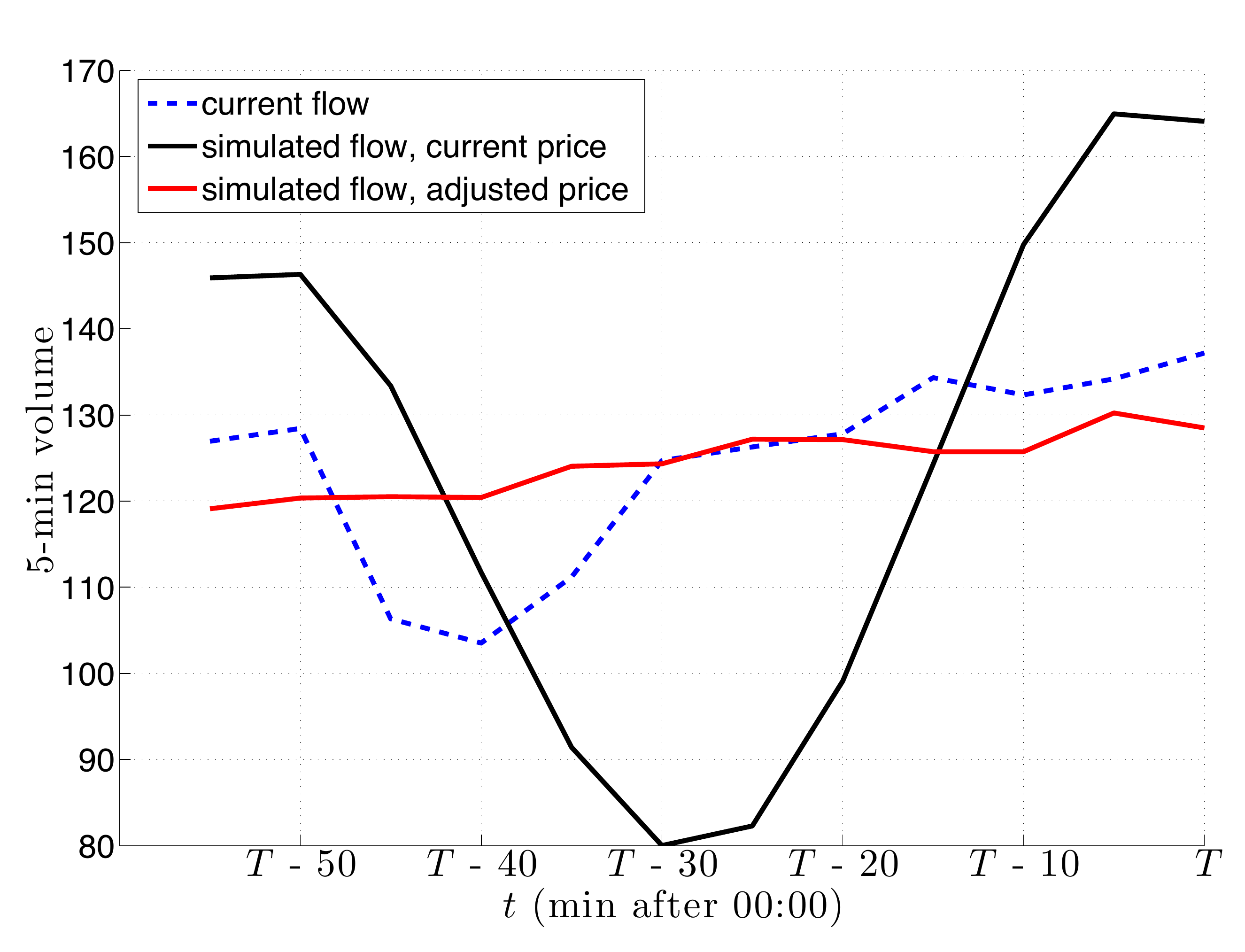}
\caption{The current flow between 7:55am and 8:45am and the results 
obtained from 10,000,000 Monte Carlo simulations 
using the model in (\ref{eq:model}) with the current price and the adjusted price based on Eq (\ref{eq:pi}).} 
\label{fig:simresults}
\end{figure}

\section{CONCLUSIONS AND FUTURE WORK}
\label{sec:conclusion}
We provided a case study of Singapore road network that shows that road pricing could significantly affect commuter trip timing behaviors.
Based on this empirical evidence, we proposed a model that describes the commuter trip timing decisions.
The analysis and simulation results showed that the proposed model reasonably matches the observed behaviors.
In addition, we proposed a pricing scheme based on the proposed model in order to
better spread peak travel and reduce traffic congestion.
Simulation results showed that uniform distribution of arrival times among motorists who regularly use the road during the peak hours could be obtained.

Future work includes considering multiple roads and incorporating the route choice behavior in the model.
We also plan to take into account stochasticity in the actual arrival time as the motorist may not arrive exactly at his/her optimal arrival time.
In addition, we are interested in incorporating the travel time factor in the model.
The average travel time for different origins and destinations as well as the portion of motorists with those origin and destination pairs
can be estimated from the data obtained from all the taxi trips that went through a road of interest.
We also plan to introduce stochasticity in the preferred arrival time $T$ (which depends, for example, on individuals' work hours).

\section{ACKNOWLEDGMENTS}
The authors gratefully acknowledge Ketan Savla for the inspiring discussions 
and Land Transport Authority of Singapore for providing the data collected from the loop detectors
installed on Anson road.
This work is supported in whole or in part by the Singapore National Research Foundation (NRF) through the Singapore-MIT Alliance for Research and Technology (SMART) Center for Future Urban Mobility (FM).


\bibliographystyle{ametsoc}
\bibliography{ref}

\end{document}